\theoremstyle{plain}  
\newtheorem{thm}{Theorem}[section]
\theoremstyle{definition} 
\newtheorem{defn}{Definition}[section]
\theoremstyle{remark}
\newcommand{\real}{\mathbb{R}} 
\newcommand{\E}{\mathbb{E}} 
\newcommand{\Q}{\mathbb{Q}} 
\newcommand{\EQ}{\E_{\hspace{0.2mm}\Q}} 
\newcommand{\cA}{\mathcal{A}} 
\newcommand{\cE}{\mathcal{E}} 
\newcommand{\cK}{\mathcal{K}} 
\newcommand{\cL}{\mathcal{L}} 
\newcommand{\one}{\mathbbm{1}}
\newcommand{\prob}{\mathrm{pr}}
\begin{document}

\title{Copula Calibration}
\author{Johanna F.~Ziegel \and Tilmann Gneiting}
\date{}
\maketitle

\begin{abstract}

We propose notions of calibration for probabilistic forecasts of
general multivariate quantities.  Probabilistic copula calibration is
a natural analogue of probabilistic calibration in the univariate
setting.  It can be assessed empirically by checking for the
uniformity of the copula probability integral transform (CopPIT),
which is invariant under coordinate permutations and coordinatewise
strictly monotone transformations of the predictive distribution and
the outcome.  The CopPIT histogram can be interpreted as a
generalization and variant of the multivariate rank histogram, which
has been used to check the calibration of ensemble forecasts.
Climatological copula calibration is an analogue of marginal
calibration in the univariate setting.  Methods and tools are
illustrated in a simulation study and applied to compare raw numerical
model and statistically postprocessed ensemble forecasts of bivariate
wind vectors.

\end{abstract}

\section{Introduction}  \label{sec:intro}

The past two decades have witnessed major developments in the
scientific approach to forecasting, in that probabilistic forecasts,
which take the form of probability distributions over future
quantities and events, have been replacing single-valued point
forecasts in a wealth of applications \citep{GKatzfuss2014}.  The goal
in probabilistic forecasting is to maximize the sharpness of the
predictive probability distributions subject to calibration
\citep{GneitingETAL2007}.  Calibration concerns the statistical
compatibility between the predictive distributions and the realizing
observations; in a nutshell, the observations are supposed to be
indistinguishable from random numbers drawn from the predictive
distributions.

For probabilistic forecasts of univariate quantities various types of
calibration have been established \citep{GRanjan2013}.  In particular,
a forecast is probabilistically calibrated if its probability integral
transform (PIT), i.e., the value of the predictive cumulative
distribution function at the realizing observation, is uniformly
distributed.  Accordingly, empirical checks for the uniformity of
histograms of PIT values have formed a cornerstone of density forecast
evaluation \citep{Dawid1984, DieboldETAL1998, GneitingETAL2007}.

In this paper we introduce notions of calibration for probabilistic
forecasts of multivariate quantities and propose tools for empirical
calibration checks in such settings, as recently called for in
hydrologic and meteorological applications \citep{SchaakeETAL2010,
Pinson2013, SchefzikETAL2013}.  In Section \ref{sec:copPIT} we study a
natural multivariate extension of the univariate PIT that is invariant
under coordinate permutations and coordinatewise strictly monotone
transformations of the predictive distribution and the realizing
observation, namely, the copula probability integral transform
(CopPIT).  Probabilistic copula calibration can be assessed
empirically by checking the uniformity of the CopPIT histogram, which
can be viewed as a generalization and variant of the multivariate rank
histogram proposed by \citet{GneitingETAL2008}.  Furthermore, we
introduce the notion of climatological copula calibration, which is an
analogue of marginal calibration in the univariate setting.  The
strengths of these notions and tools include their ease of
interpretability and their applicability to both density and ensemble
forecasts.

In Section \ref{sec:simulation} we employ CopPIT histograms in a
simulation study, and in Section \ref{sec:EMOS} we use them to compare
raw numerical model and statistically postprocessed ensemble forecasts
of bivariate wind vectors over Germany.  The paper ends with a
discussion in Section \ref{sec:discussion}.

\section{Multivariate notions of calibration}  \label{sec:copPIT} 

We introduce the copula probability integral transform (CopPIT) and
the notions of probabilistic copula calibration and climatological
copula calibration within the prediction space setting of
\cite{GRanjan2013}.  Throughout, we identify a probability measure on
$\real^d$ with its cumulative distribution function (CDF).

The Kendall distribution function $\cK_H$ of a probability measure or CDF $H$
on $\real^d$ is defined as 
\[
\cK_H(w) = \prob \{ H(X) \leq w \} 
\quad \mbox{for} \quad 
w \in [0,1],
\]
where the random vector $X$ has distribution $H$.  It is well known
that if $d = 1$ and $H$ is continuous then $\cK_H$ corresponds to a
uniform distribution on $[0,1]$.  In dimension $d > 1$, the Kendall
distribution depends only on the copula of the probability measure $H$
and generally it is not uniform \citep{BarbeETAL1996}.  In fact, for
any CDF $K$ on $[0,1]$ with $K(w) \geq w$ for $w \in [0,1]$ and any
integer $d > 1$, there exists a probability measure $H$ on $\real^d$
such that $\cK_H = K$ \citep{NelsenETAL2003, GenestETAL2011}.

\subsection{Probabilistic and climatological copula calibration}  \label{sec:notions} 

As noted, we work in the prediction space setting introduced by
\citet{GRanjan2013}.  Specifically, let $(\Omega, \cA, \Q)$ be a
probability space.  Let $Y$ be an $\real^d$-valued random vector on
$\Omega$, and let $H$ be a $d$-variate CDF-valued random quantity that
is measurable with respect to some sub $\sigma$-algebra $\cA_0
\subseteq \cA$.  Furthermore, let the random variable $V$ be uniformly
distributed on the unit interval $[0,1]$ and independent of $Y$ and
$\cA_0$.

The CDF-valued random quantity $H$ provides an $\cA_0$-measureable
predictive probability measure for the $\real^d$-valued outcome $Y$.
It is said to be ideal relative to $\cA_0$ if it equals the
conditional law of $Y$ given $\cA_0$, which we denote by $H =
\cL(Y|\cA_0)$.  Thus, an ideal forecast honors the information in the
sub $\sigma$-algebra $\cA_0 \subseteq \cA$ to the full extent
possible. For a function $f$ on the real line, we use the notation $f(y-)=\lim_{x \uparrow y} f(x)$ to denote the left-hand limit, if it exists.

\begin{defn}[CopPIT]
In the prediction space setting, the random variable
\begin{equation}\label{eq:CopPIT}
U_H = \cK_H\{H(Y)-\} + V \left[ \cK_H\{H(Y)\} - \cK_H\{H(Y)-\} \right]
\end{equation}
is the copula probability integral transform (CopPIT) of the
CDF-valued random quantity $H$.
\end{defn}

If $T$ is a deterministic coordinatewise strictly monotone transformation 
on $\real^d$, i.e., 
\[
T(x_1, \ldots, x_d) = \left( T_1(x_1), \ldots, T_d(x_d) \right)
\]
where the mappings $T_1, \ldots, T_d$ are real-valued and strictly
increasing, the distribution of $U_H$ for the probabilistic forecast
$H$ and the outcome $Y$ is the same as that of $U_{H \circ T^{-1}}$
for the probabilistic forecast $H \circ T^{-1}$ and the outcome
$T(Y)$.  The distribution of $U_H$ also is invariant under coordinate
permutations.  An interesting open question is for the largest
class of transformations under which this invariance holds, with the
class of the locally orientation preserving functions being a 
candidate.

\begin{defn}
The forecast $H$ is probabilistically copula calibrated if its CopPIT
is uniformly distributed on the unit interval.
\end{defn}

Probabilistic copula calibration can be viewed as a multivariate
generalization of the notion of probabilistic calibration in the
univariate case.  In the prediction space setting, let $F$ be a
univariate CDF-valued random quantity for the real-valued outcome $Y$.
\citet[Definition 2.6]{GRanjan2013} define $F$ to be probabilistically
calibrated if
\begin{equation}  \label{eq:PIT}
U_F = F(Y-) + V\{F(Y) - F(Y-)\}
\end{equation}
is standard uniformly distributed.  If the dimension is $d = 1$ then equation 
\eqref{eq:CopPIT} is the same as equation \eqref{eq:PIT}.

\begin{defn}
The forecast $H$ is climatologically copula calibrated if 
\begin{equation}\label{eq:clical}
\Q \{ H(Y) \leq w \} = \EQ \{ \cK_H(w) \}
\quad \mbox{for} \quad 
w \in [0,1].
\end{equation}
\end{defn}

The concept of climatological copula calibration can be interpreted as
marginal calibration of the Kendall distribution, where marginal
calibration refers to the univariate prediction space setting, as
follows \citep[Definition 2.6]{GRanjan2013}.  If $F$ is a univariate
CDF-valued random quantity for the real-valued outcome $Y$, then it is
marginally calibrated if $\Q (Y \leq y) = \EQ \{ F(y) \}$ for $y \in
\real$.

The following result justifies the quest for probabilistically and
climatologically copula calibrated probabilistic forecasts in
practical settings.

\begin{thm}
If the forecast $H$ is ideal with respect to the $\sigma$-algebra
$\cA_0$, then it is both probabilistically and
climatologically copula calibrated.
\end{thm}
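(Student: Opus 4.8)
The plan is to reduce the whole statement to a single conditional fact: when $H$ is ideal, the conditional distribution of the scalar random variable $W = H(Y)$ given $\cA_0$ is precisely the $\cA_0$-measurable Kendall distribution $\cK_H$. I would establish this first. Since $H = \cL(Y \mid \cA_0)$, a regular conditional distribution for $Y$ given $\cA_0$ is $H$ itself; because $H$ is $\cA_0$-measurable, on a set of full measure one may freeze $H$ as a fixed CDF $h$ and compute
\[
\Q\{ H(Y) \leq w \mid \cA_0 \} = h\bigl(\{ y : h(y) \leq w \}\bigr) = \cK_h(w),
\]
where the last equality is just the definition of the Kendall distribution function, $\cK_h(w) = \prob\{ h(X) \leq w \}$ with $X \sim h$. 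Restoring the randomness in $H$, this reads $\Q\{ H(Y) \leq w \mid \cA_0 \} = \cK_H(w)$, i.e.\ $W \mid \cA_0$ has CDF $\cK_H$.

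Climatological copula calibration is then immediate from the tower property. Taking expectations in the identity just obtained,
\[
\Q\{ H(Y) \leq w \} = \EQ\bigl[ \Q\{ H(Y) \leq w \mid \cA_0 \} \bigr] = \EQ\{ \cK_H(w) \},
\]
which is exactly \eqref{eq:clical}.

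For probabilistic copula calibration I would invoke the distributional transform: for any random variable $W$ with CDF $G$ and any $V$ uniform on $[0,1]$ and independent of $W$, the randomized transform $G(W-) + V\{ G(W) - G(W-) \}$ is standard uniform, even when $G$ has atoms. Here I apply this conditionally on $\cA_0$. By the first step, $W = H(Y)$ has conditional CDF $\cK_H$ given $\cA_0$, and this conditional CDF is $\cA_0$-measurable since $H$ is; moreover, as $V$ is independent of $Y$ and of $\cA_0$, it remains uniform and independent of $W$ conditionally on $\cA_0$. Hence, conditionally on $\cA_0$, the right-hand side of \eqref{eq:CopPIT} is the randomized transform of $W$ through its own conditional CDF $\cK_H$, so $U_H \mid \cA_0$ is uniform on $[0,1]$. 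Since this conditional law does not depend on $\cA_0$, the unconditional distribution of $U_H$ is uniform as well.

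The main obstacle is the first step, where $H$ plays three roles at once --- as the conditioning law of $Y$, as the map applied to $Y$, and as the subscript of $\cK_{\hspace{0.2mm}\cdot}$. Making the heuristic ``freeze $H = h$ on $\cA_0$'' rigorous requires a regular conditional distribution together with a measurable selection, so that the pointwise identity $\cK_h(w) = h(\{ y : h(y) \leq w \})$ lifts to the random, $\cA_0$-measurable CDF $\cK_H$; the remaining ingredients, namely the tower property and the distributional transform, are standard.
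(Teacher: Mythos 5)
Your proposal is correct and follows essentially the same route as the paper: condition on $\cA_0$ so that $Y$ has law $H$, identify the conditional CDF of $H(Y)$ with $\cK_H$, deduce climatological calibration by the tower property, and deduce probabilistic calibration by applying the (randomized) distributional transform conditionally. The paper compresses the second part into an appeal to ``well known results for non-random CDFs and conditional expectations,'' which is exactly the conditional distributional-transform argument you spell out.
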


\begin{proof}
Suppose that $H = \cL(Y|\cA_0)$ and let $w \in [0,1]$. Then
\[ 
\Q \{ H(Y) \leq w \} 
= \EQ \, \EQ \left[ \one \{ H(Y) \leq w \} | \cA_0 \right] 
= \EQ \{ \cK_H(w) \}, 
\]
whence $H$ is climatologically copula calibrated. Turning to
probabilistic copula calibration, well known results for non-random
CDFs and conditional expectations imply that $\Q \{ U_H \leq w \} =
w$.
\end{proof}

Suppose that the probabilistic forecasts $F_1, \ldots, F_d$ for the
marginals of the random vector $Y = (Y_1,\dots,Y_d)$ are
probabilistically calibrated.  Then probabilistic copula calibration
can be seen as a property that depends only on the copula $C$ of the
forecast $H$ and the copula of the outcome vector $Y$, as follows.
Probabilistic calibration of the marginals implies that the random
vector $W = \left( U_{F_1}, \ldots, U_{F_d} \right)$ has uniformly
distributed marginals.  Therefore, the problem of predicting $Y$ by
$H$ can be reduced to predicting $W$ by a copula $C$.  Then
\[
H = C \circ \left( F_1, \ldots, F_d \right)
\]
yields a multivariate probabilistic forecast of $Y$ with
probabilistically calibrated marginals. For a related discussion 
in the context of ensemble forecasts, see \citet{SchefzikETAL2013}.

\subsection{Empirical assessment of copula calibration}  \label{sec:empirical}

In the practice of forecast evaluation, one observes a sample 
\[
(H_1,y_1), \ldots, (H_J,y_J)
\]
from the joint distribution of the probabilistic forecast and the
outcome.

To assess probabilistic copula calibration one can plot a histogram of
the empirical CopPIT values 
\begin{equation} \label{eq:CopPIT1} 
u_j = \cK_{H_j}\{H_j(y_j)-\} + v_j \, [ \cK_{H_j}\{H_j(y_j)\} -
\cK_{H_j}\{H_j(y_j)-\} ] 
\end{equation} 
for $j = 1, \ldots, J$, where $v_1, \ldots, v_J$ are independent
standard uniformly distributed random numbers.  Based on ideas in
\cite{CzadoETAL2009}, one can also define a non-randomized version of
the CopPIT, but we do not pursue this here.  In most cases of
practical interest, the Kendall distribution is continuous and then we
can write 
\begin{equation} \label{eq:CopPIT2} u_j =
\cK_{H_j}\{H_j(y_j)\}, 
\end{equation} 
without any need to invoke $v_j$.  If $d = 1$, the CopPIT histogram
coincides with the PIT histogram, the key tool in checking the
calibration of univariate probabilistic forecasts
\citep{DieboldETAL1998, GneitingETAL2007, CzadoETAL2009}.  If the
forecasts are probabilistically copula calibrated, the CopPIT histogram
is uniform up to random fluctuations, and deviations from uniformity
can be interpreted diagnostically, as illustrated in Section
\ref{sec:simulation}.

For multivariate distributions with an Archi\-medean copula the
Kendall distribution function $\cK_H$ is available in closed form
\citep{McNeilNeslehova2009}, and then we can readily evaluate
\eqref{eq:CopPIT1} or \eqref{eq:CopPIT2}.  For other types of
distributions, we approximate $\cK_H$ by the empirical CDF of $H(x_1),
\ldots, H(x_n)$ for some large $n$, where $x_1, \ldots, x_n$ is a
sample from a $d$-variate population with CDF $H$.  Another
approximation that does not require the potentially costly evaluation
of $H$ uses the empirical Kendall distribution function $\cK_n$, i.e.,
the empirical CDF of the pseudo-observations
\begin{equation}  \label{eq:empK}
w_k = \frac{1}{n}\sum_{j=1}^n \one\{ x_j \preceq x_k\}
\qquad \text{for $k = 1, \ldots, n$}, 
\end{equation}
where $x_j = (x_{j1}, \ldots, x_{jd}) \preceq x_k =(x_{k1}, \dots,
x_{kd})$ if $x_{jl} \leq x_{kl}$ for $l = 1, \ldots, d$.  As
\cite{BarbeETAL1996} show, the empirical Kendall distribution
function $\cK_n$ generally converges to $\cK_H$.

To assess climatological copula calibration one can plot
\[
\frac{1}{J} \sum_{j=1}^J \one \{ H_j(y_j) \leq w \} 
\quad \text{vs.} \quad 
\frac{1}{J} \sum_{j=1}^J \cK_{H_j}(w) 
\]
for $w \in [0,1]$, which are the empirical analogues of the left- and
right-hand sides of \eqref{eq:clical}.  If the forecasts are
calibrated the resulting plot ought to be close to the diagonal.

\subsection{Comparison to the multivariate rank histogram}  \label{sec:comparison}

As noted, the CopPIT histogram generalizes the multivariate rank
histogram introduced by \citet{GneitingETAL2008} in the context of
ensemble forecasts.  This refers to the situation in which the
probabilistic forecasts $H_1, \ldots, H_J$ are empirical measures with
a fixed size $m$.

For ease of exposition, we drop the indices and suppose that the
forecast $H$ places mass $1/m$ at each of $x_1, \ldots, x_m \in
\real^d$, while the outcome is $y \in \real^d$.  The associated
multivariate rank is obtained as follows.  Define pre-ranks $\rho_0
= 1 + \sum_{i=1}^m \one(x_i \preceq y)$ and 
\[
\rho_k = \one(y \preceq x_k) + \sum_{i=1}^m \one(x_i \preceq x_k) \quad \text{for $k = 1, \ldots, m$.}
\]
The multivariate rank then is the rank of the
observation pre-rank $\rho_0$ among $\rho_0, \rho_1, \ldots, \rho_m$,
with ties resolved at random.  Conditional on $H$ and $y$ we thus get
a multivariate rank with a discrete uniform distribution on the integers 
\begin{equation}  \label{eq:rank}   
1 + \sum_{k=1}^m \one ( \rho_k < \rho_0 ), \; \ldots \; , 
1 + \sum_{k=1}^m \one ( \rho_k \leq \rho_0 ).
\end{equation} 
We now link the multivariate rank and the CopPIT.  If $H$ is the
empirical measure with mass $1/m$ at $x_1, \ldots, x_m \in \real^d$,
its Kendall distribution function can be expressed in terms of the
pseudo-observations at \eqref{eq:empK}, in that
\[
\cK_H(w) = \frac{1}{m} \sum_{k=1}^m \one(w_k \le w)
\quad \text{for} \quad
 w \in [0,1]. 
\]
Since $\rho_0 = m H(y) + 1$ and $\rho_k = m w_k + \one(y \preceq x_k)$
for $k = 1, \ldots, m$, we can express the CopPIT value
\eqref{eq:CopPIT1} in terms of the pseudo-ranks.  A bit of algebra shows
that conditional on $H$ and $y$ the CopPIT value has a uniform
distribution on the interval
\begin{equation}  \label{eq:CopPIT.rank} 
\left[ 
\frac{1}{m} \sum_{k=1}^m \one \{ \rho_k - \one(y \preceq x_k) < \rho_0 - 1 \},  
\frac{1}{m} \sum_{k=1}^m \one \{ \rho_k - \one(y \preceq x_k) \leq \rho_0 - 1 \} \right] \! .
\end{equation} 
A comparison of \eqref{eq:rank} and \eqref{eq:CopPIT.rank} suggests
that if the ensemble size $m$ is large the CopPIT and the multivariate
rank histogram tend to look nearly identical.  If $m$ is small this
may not be the case, as we illustrate in Section \ref{sec:EMOS}.

The multivariate rank histogram has also been used to assess the
calibration of probabilistic forecasts in the form of continuous
multivariate distributions.  \cite{SchuhenETAL2012} transform
predictive densities for bivariate wind vectors into ensemble
forecasts, by drawing a simple random sample from each predictive
distribution, where the particular choice of the sample size $m = 8$
allows for a better comparison with the underlying ensemble forecast.
In such settings we prefer to work with the CopPIT histogram, as it
makes better use of the structure of the predictive distributions and
does not induce additional randomness into the evaluation procedure.

We illustrate this latter aspect in a simulation setting in dimension
$d = 50$, where we choose the sample size $m = 8$ to compute the
multivariate rank histograms.  In weather and climate forecasting,
ensemble systems operate with small $m$ and very high $d$
\citep{GRaftery2005, Leutbecher2008}, so this scenario is practically
relevant.  Specifically, let $B_1$ and $B_2$ be independent beta
variables with parameters $(\alpha_1,\beta_1) = (2,5)$ and
$(\alpha_2,\beta_2) = (5,2)$.  Conditional on $(B_1,B_2)$ the outcome
vector has a Frank copula with each pairwise Kendall's $\tau$ equal to
$(B_1+B_2)/2$.  The forecast copula is either the true Frank copula, a
Frank copula with each pairwise $\tau$ equal to $0.8 (B_1+B_2)/2$, or
a Joe copula with each pairwise $\tau$ equal to $(B_1+B_2)/2$, as
described by \citet{Nelsen2006}.  The 50 marginals are all standard
normal and correctly predicted.

\begin{figure}[t]
\centering
\includegraphics{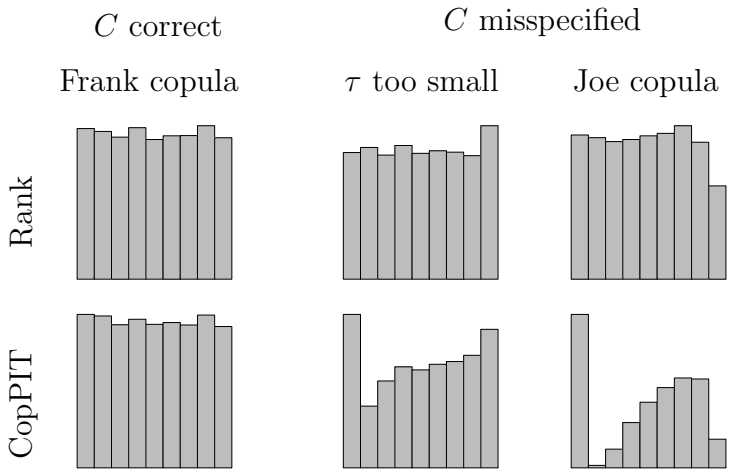}
\caption{Multivariate rank and CopPIT histograms in the high-dimensional simulation 
  setting described in the text.  \label{fig:highdim}}
\end{figure}

Figure \ref{fig:highdim} shows multivariate rank and CopPIT histograms
in this setting, based on a sample of $4,000$ forecast--observation
pairs.  The rank histograms have difficulties in detecting the
deficient probabilistic forecasts due to the aforementioned
discretization effect.  In contrast, the CopPIT histograms for the
forecasts with the misspecified copulas are non-uniform, as desired.

\subsection{Directional copula calibration}  \label{sec:directional}

The CopPIT is a natural multivariate generalization of the PIT in the
univariate setting.  We now discuss a further generalization that
allows for directional approaches.  In doing so, we refer to the
probabilistic forecast for the $\real^d$-valued outcome $Y$ either by
$H$ or $\mu$, with $H$ denoting a CDF and $\mu$ the associated
probability measure.

Let $e_1, \ldots, e_d$ be an orthonormal basis of $\real^d$ and let
$\cE$ be the closed convex cone spanned by this basis.  We define the
$\cE$-CDF of the probability measure $\mu$ as
\[
H^\cE : \real^d \to [0,1], \qquad x \mapsto \mu(x + \cE).
\]
Any function $H^{\cE}$ characterizes the probability measure $\mu$.
The usual CDF is obtained by choosing $e_j = (e_{1j}, \ldots, e_{dj})$
with $e_{ij} = - \one(i=j)$, whereas the survival function of $\mu$ is
$H^{\cE}$ with $e_{ij} = \one(i=j)$.  The CopPIT depends on the
particular CDF chosen, and distinct choices of $\cE$ may reveal
distinct facets of calibration or the lack thereof.  In principle, one
could envision a procedure in the style of a projection pursuit
algorithm \citep{Huber1985} that finds those $\cE$ where the deviation
of the CopPIT histogram from uniformity is the most pronounced.  In the
case of density forecasts a related idea was considered by
\citet{Ishida2005}.

\begin{figure}[t]
\centering
\includegraphics{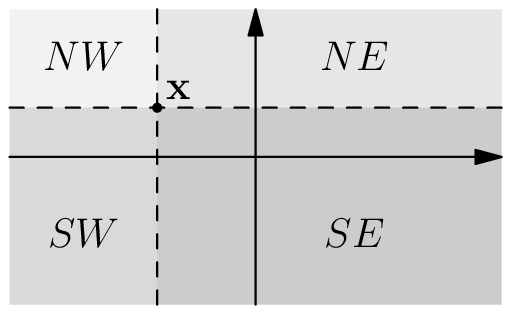}
\quad
\includegraphics{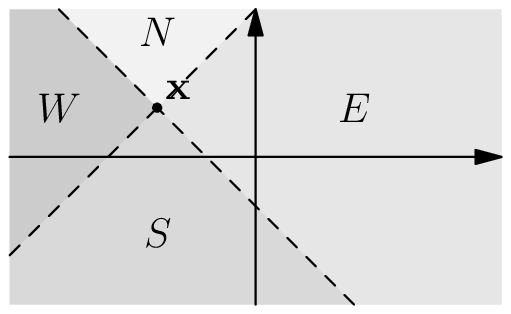}
\caption{Illustration of quadrants for directional
  CopPITs \label{fig:quadrants}}
\end{figure}

Certain choices of the cone $\cE$ might be particularly useful.  We
illustrate this for $d=2$, but the idea generalizes to higher
dimensions.  Let SW be the convex cone spanned by $(-1,0)$ and
$(0,-1)$, i.e., the south-west quadrant.  Analogously we define the
quadrants SE, NE, and NW, as illustrated in Figure
\ref{fig:quadrants}.  If the mariginals are probabilistically
calibrated, probabilistic copula calibration with respect to $H^{\rm
SW}$, which is the classical multivariate CDF, only depends on the
forecast copula.  This argument remain valid for $H^{\rm SE}$, $H^{\rm
NE}$, and $H^{\rm NW}$, with the latter being the multivariate
survival function.

Similarly, we can assess directional climatological copula calibration
by plotting 
\[
\frac{1}{J} \sum_{j=1}^J \one \{ H^\cE_j(y_j) \leq w \} 
\quad \text{vs.} \quad 
\frac{1}{J} \sum_{j=1}^J \cK_{H^\cE_j}(w) 
\]
for $w \in [0,1]$ and suitable choices of the cone $\cE$.

\section{Simulation study}  \label{sec:simulation} 

\begin{table}[p]
\caption{Parameters of forecast distributions in the simulation study \label{tab:1}}
\medskip
\centering
\begin{tabular}{c|ccc}
\hline
\hline
Forecast & First Margin $F_1$   & Second Margin $F_2$        & Copula $C$ \\
\hline
T & correct                     & correct                    & correct                  \\
  & $\mu_1 = 2 - B_1$           & $\sigma_2^2 = 1/B_2$       & $\tau = (B_1+B_2)/2$ \\ 
[1ex] 
F & biased                      & underdispersed             & misspecified \\
  & $\hat\mu_1 = 0.8 (2 - B_1)$ & $\hat\sigma_2^2 = 0.8/B_2$ & $\hat\tau = 0.6 (B_1+B_2)/2$ \\
\hline
\end{tabular}
\bigskip
\end{table}

\begin{figure}[p]
\centering
\includegraphics{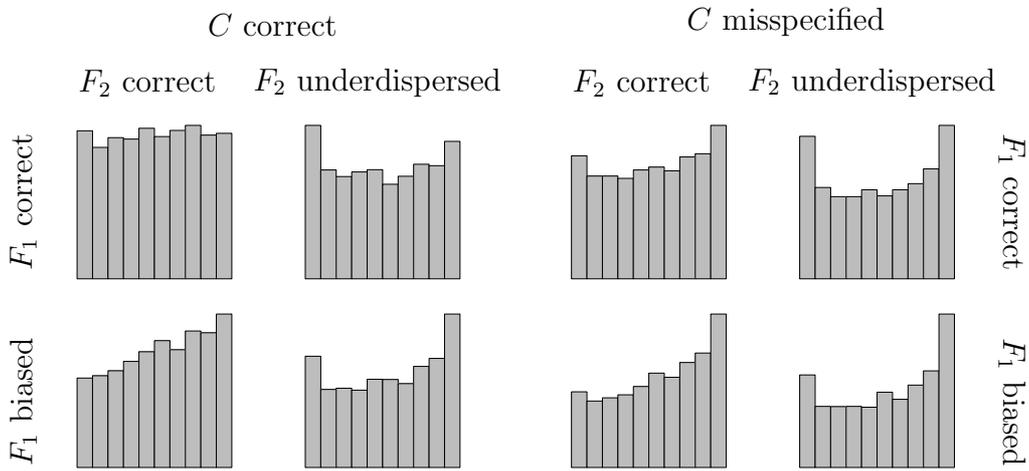}
\caption{CopPIT histograms for the forecasters in the simulation
  study \label{fig:2}}
\end{figure}

We consider the following simulation setting in dimension $d = 2$.
Let $B_1$ and $B_2$ be independent beta variables with parameters
$(\alpha_1,\beta_1) = (2,5)$ and $(\alpha_2,\beta_2) = (5,2)$,
respectively.  Conditional on $(B_1,B_2)$ the outcome vector $Y =
(Y_1,Y_2)$ has normal margins and a Gumbel copula with Kendall's
$\tau$ equal to $(B_1+B_2)/2$, as described by \citet{Nelsen2006}.
The margin $Y_1$ has mean $\mu_1 = 2 - B_1$ and unit variance; the
margin $Y_2$ has mean zero and variance $\sigma_2^2 = 1/B_2$.

We assess eight probabilistic forecasters with various types of
forecast deficiencies.  All forecasters have access to $(B_1,B_2)$ and
specify a Gumbel copula with Kendall's $\tau$ equal to $\hat{\tau}$
and normal marginals, where the first margin $F_1$ has mean
$\hat{\mu}_1$ and unit variance, and the second margin $F_2$ has mean
zero and variance $\hat{\sigma}_2^2$, with details provided in Table
\ref{tab:1}.  We name each forecaster with a sequence of three
letters, where T stands for true and F for false.  For example, the
forecaster TTF specifies the first and the second marginal
distributions correctly, but misspecifies the copula.  The forecaster
TTT is ideal with respect to the $\sigma$-algebra generated by
$(B_1,B_2)$ in the sense defined in Section \ref{sec:notions} and does
not show any forecast deficiencies.

\begin{figure}[p]
\centering
\includegraphics{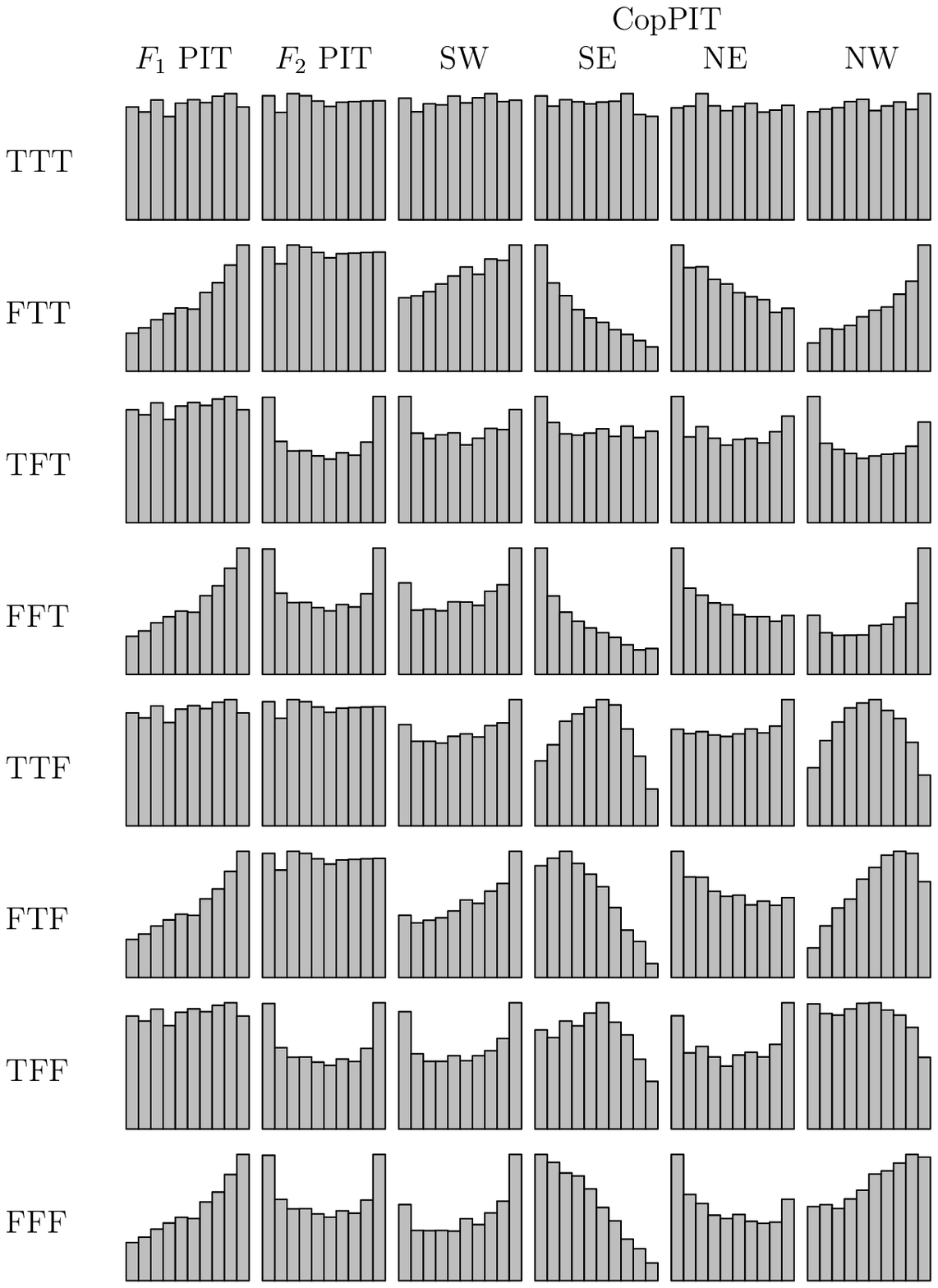}
\caption{Univariate PIT and directional CopPIT histograms for
  the forecasters in the simulation study \label{fig:3}}
\end{figure}

\begin{figure}[p]
\centering
\includegraphics[width=0.7\textwidth]{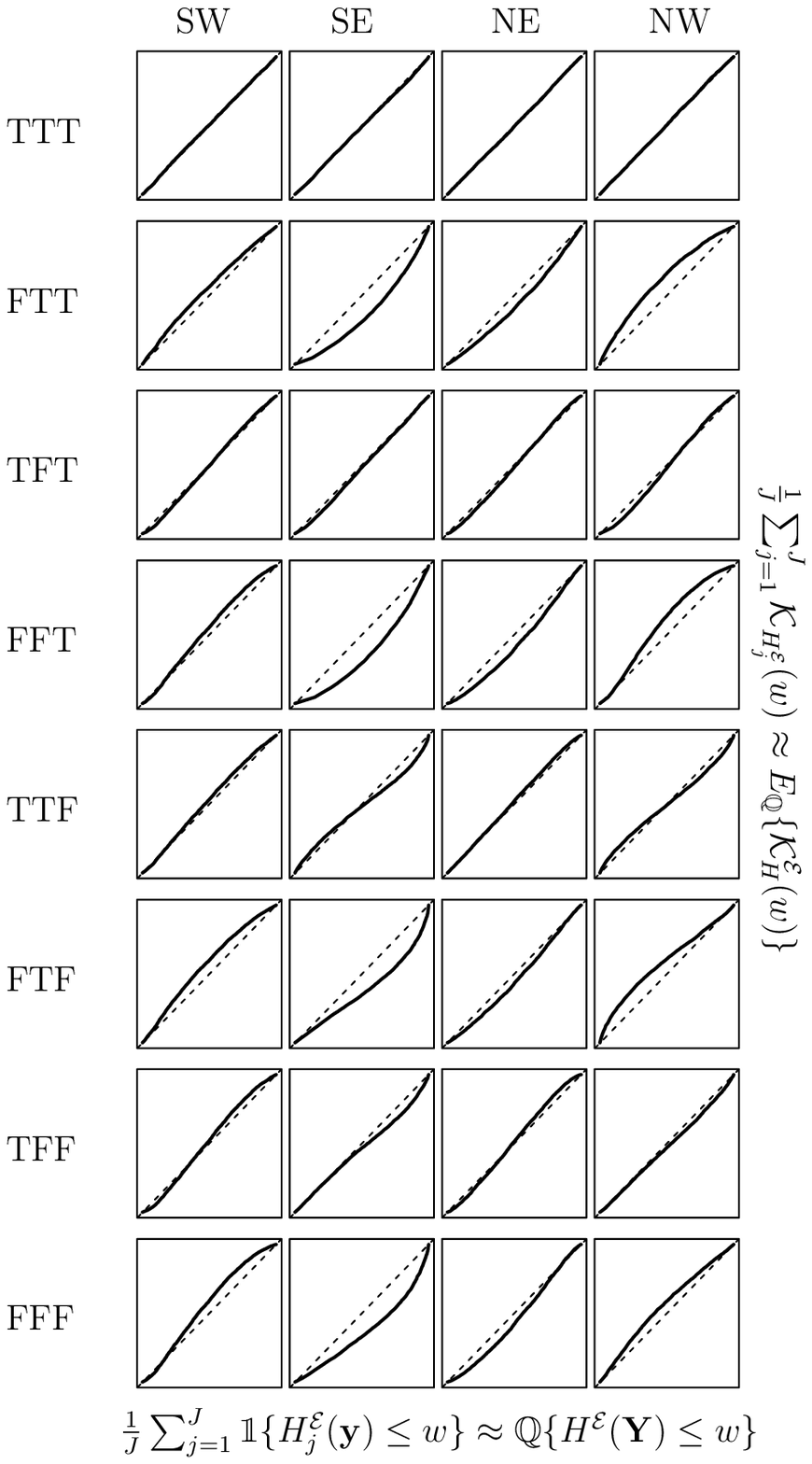}
\caption{Directional climatological calibration plots for the forecasters in
  the simulation study \label{fig:clical}}
\end{figure}

Figure \ref{fig:2} shows CopPIT histograms for the eight forecasters
based on a sample of $4,000$ forecast--observation pairs.  It is
interesting to observe that the standard CopPIT histogram detects
misspecified marginals as well as misspecified copulas.  Similar to
the interpretation of univariate PIT histograms
\citep{GneitingETAL2007}, biases yield skewed histograms,
underdispersed forecasts induce a U-shape, and overdispersed forecasts
an inverse U-shape.

Figure \ref{fig:3} shows univariate PIT histograms along with
directional CopPIT histograms based on another sample of $4,000$
forecast--observation pairs.  The joint consideration of the
histograms can diagnose specific forecast deficiencies.  As a rule of
thumb, the CopPIT histograms mimic features seen in the univariate PIT
histograms if the copula is well specified.  In contrast, if the
copula is ill specified, the CopPIT histograms show deviations from
uniformity in shapes that are not necessarily reflected by the PIT
histograms.  Finally, Figure \ref{fig:clical} shows directional
climatological copula calibration plots.  While misspecifications of
the probabilistic forecasts are readily discernible, the climatological
copula calibration plots appear to be more difficult to interpret
diagnostically than the CopPIT histograms.

\section{Case study: Probabilistic forecasts of wind vectors over the 
         Pacific Northwest}  \label{sec:EMOS}

In a recent change of paradigms, meteorologists have adopted
probabilistic weather forecasting in the form of ensemble forecasts.
An ensemble forecast is a collection of numerical weather prediction
(NWP) model runs that are based on distinct initial conditions and/or
model physics parameters \citep{GRaftery2005, Leutbecher2008}.
Despite their undisputed success, ensemble forecasts tend to be biased
and underdispersed, in the sense of the spread among the ensemble
members being too small to be realistic.  Therefore, methods for the
statistical postprocessing of ensemble forecasts have been developed,
such as the ensemble model output statistics (EMOS) approach of
\cite{GneitingETAL2005}, which generates Gaussian predictive
distributions for univariate variables.  In a more recent development,
\cite{SchuhenETAL2012} developed a bivariate EMOS method that
generates bivariate Gaussian predictive distributions for wind
vectors.

Here, we take up their work on probabilistic forecasts of surface wind
vectors over the North American Pacific Northwest based on the
University of Washington Mesoscale Ensemble \citep{Eckel2005}, which
has $m = 8$ members.  The test data comprise calendar year 2008 with a
total of 19,282 forecast--observations pairs at a prediction horizon
of 48 hours.  We assess and compare the raw ensemble forecast, the
statistically postprocessed regional bivariate EMOS forecast
developed by \cite{SchuhenETAL2012}, and an Independent EMOS forecast
with the same bivariate Gaussian predictive distribution, except that
the correlation coefficient is misspecified at zero.

\begin{figure}[t]
\centering
\includegraphics{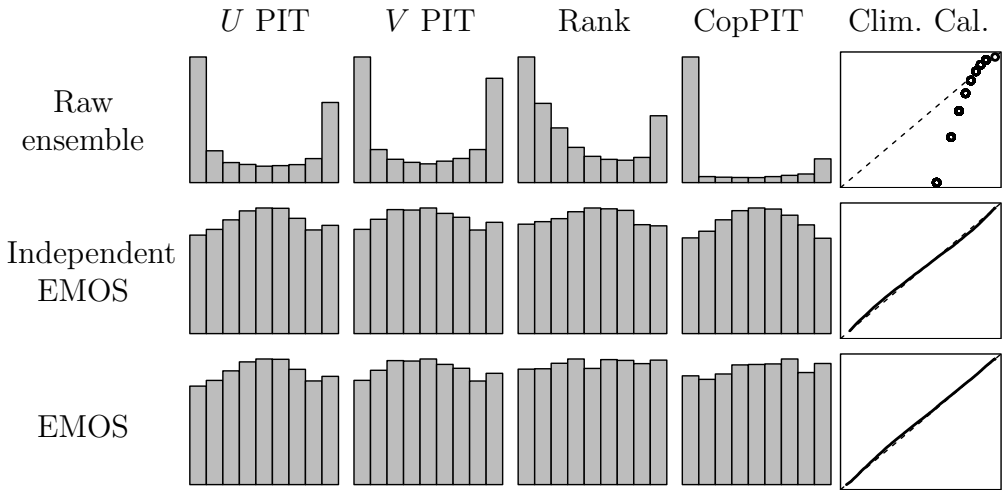}
\caption{Univariate PIT histograms, multivariate rank histogram,
  CopPIT histogram and climatological CopPIT calibration plot for
  the raw ensemble, Independent EMOS, and EMOS forecasts of wind
  vectors.  Following common practice, we label the wind vector
  components as $u$ and $v$. \label{fig:EMOS}}
\end{figure}

Figure \ref{fig:EMOS} shows univariate PIT histograms, the
multivariate rank histogram, the CopPIT histogram, and the
climatological copula calibration plot for the raw ensemble,
Independent EMOS, and EMOS forecasts.  The raw ensemble forecast shows
U-shaped PIT, multivariate rank and CopPIT histograms, which attest to
its underdispersion, and the climatological copula calibration plot
points at severe forecast deficiencies.  The univariate PIT histograms
for the Independent EMOS and EMOS forecasts are identical and diagnose
slight overdispersion.  However, the bivariate rank and CopPIT
histograms for the EMOS forecast are more uniform than for the
Independent EMOS forecast, as the Independent EMOS technique fails to
take dependencies between the wind vector components into account,
with the CopPIT histogram providing a much clearer diagnosis than the
multivariate rank histogram.

\section{Discussion}  \label{sec:discussion} 

In this paper, we introduced the copula probability integral transform
(CopPIT), and we proposed CopPIT histograms and climatological copula calibration
diagrams as diagnostic tools in the evaluation and comparison of
probabilistic forecasts of multivariate quantities.  These tools apply
to non-parametric, semi-parametric and parametric approaches and
thus can be employed to diagnose strengths and deficiencies of
multivariate stochastic models in nearly any setting, be it predictive
or not.

Extant methods for calibration checks for probabilistic forecasts of
multivariate quantities apply either to ensemble forecasts only, such
as the minimum spanning tree rank histogram and the multivariate rank
histogram \citep{SmithHansen2004, Wilks2004, GneitingETAL2008}, or
they apply to density forecasts only, such as the methods of
\cite{DieboldETAL1999}, \cite{Ishida2005}, and \cite{Gonzalez2012} that
rely on the univariate PIT and the Rosenblatt transform
\citep{Rosenblatt1952, Ruschendorf2009} in one way or another.  By way
of contrast, CopPIT histograms and climatological copula calibration diagrams
apply to all types of probabilistic forecasts, including both, ensemble
forecasts and density forecasts.

In our case study, we assessed probabilistic forecasts of raw ensemble
and statistically postprocessed density forecasts of bivariate wind
vectors.  However, our methods also apply in higher dimensions and
then it may be useful to plot CopPIT histograms and climatological
copula calibration diagrams for a range of subvectors of the outcome, too.

As noted, probabilistic forecasting strives to maximize the sharpness
of the predictive probability distributions subject to calibration
\citep{GneitingETAL2007}, and the methods proposed here serve to
evaluate calibration only.  If probabilistic forecasters are to be
ranked considering both calibration and sharpness, proper scoring
rules can be employed \citep{GRaftery2007, GneitingETAL2008}, with
recent theoretical advances having been made by \cite{ Ehm2012}.
\cite{DiksETAL2010} and \cite{RopnackETAL2013} advocate the use of the
logarithmic score to compare probabilistic forecasts of multivariate
quantities.  The event based approach of \citet{PinsonGirard2012}
reduces a high-dimensional quantity to a binary event --- essentially,
the ultimate dimension reduction --- and applies proper scoring rules
to assess the induced probability forecasts for dichotomous events.
While these techniques aim to rank probabilistic forecasters, CopPIT
histograms and climatological copula calibration diagrams are diagnostic tools
that strive to inform model development and spur model improvement.

\section*{Acknowledgements} 

The authors thank Nina Schuhen and Thordis Thorarinsdottir for
assistance with the data handling.  Tilmann Gneiting acknowledges
funding from the European Union Seventh Framework Programme under
grant agreement no.~290976.

\bibliographystyle{plainnat}
\bibliography{biblio}

\end{document}